\documentclass{llncs} 

\usepackage{amssymb}
\usepackage{textcomp}
\usepackage{amsmath}

\usepackage{amsthm}
\usepackage{algorithm}
\usepackage{algpseudocode}
\usepackage{comment}
\usepackage{graphicx}
\usepackage{color}
\pagestyle{plain}

\setcounter{page}{1}

\newcommand{\ket}[1]{|#1\rangle}

\begin{document}

\title{Quantum-over-classical Advantage in Solving Multiplayer Games}
\author{Dmitry~Kravchenko$^{1}$ \and Kamil~Khadiev$^{2,3}$\and Danil~Serov$^{3}$ \and Ruslan Kapralov$^{3}$} 

\institute{Center for Quantum Computer Science, Faculty of Computing, University of Latvia, Riga, Latvia\and Smart Quantum Technologies Ltd., Kazan, Russia\and Kazan Federal University, Kazan, Russia \\ \email{kravchenko@gmail.com, kamilhadi@gmail.com, serovdanilru@gmail.com, kapralov$\_$ruslan@mail.ru} }

\maketitle

\begin{abstract}

We study the applicability of quantum algorithms in computational game theory
and generalize some results related to Subtraction games,
which are sometimes referred to as one-heap Nim games.

In quantum game theory, a subset of Subtraction games became the first explicitly defined
class of zero-sum combinatorial games with provable separation between quantum and classical
complexity of solving them.
For a narrower subset of Subtraction games, an exact quantum sublinear algorithm is known
that surpasses all deterministic algorithms for finding solutions with probability $1$.

Typically, both Nim and Subtraction games are defined for only two players.
We extend some known results to games for three or more players,
while maintaining the same classical and quantum complexities:
$\Theta\left(n^2\right)$ and $\tilde{O}\left(n^{1.5}\right)$ respectively.

\textbf{Keywords:} quantum game theory, quantum combinatorial games, quantum multiplayer games, quantum algorithm, Nim, subtraction game
\end{abstract}

\section{Introduction}

\emph{Quantum game theory} traditionally is being studied in the context of nonlocal properties of quantum particles, and usually stays apart from quantum computing and quantum algorithms.
In contrast, \emph{Quantum combinatorial game theory} is an approximately yearling branch of game theory, where quantum algorithms are applied for solving classical combinatorial games.

First explicit examples of combinatorial games with quantum-better-than-classical solving algorithms are some subsets of \emph{Subtraction games}.
\cite{kks2019} identify a specific subset of size $const^{n^2}$ of Subtraction games which are solvable by a quantum algorithm in time $O\left(n^{1.5}\log{n}\right)$ in bounded error setting.
\cite{hyzl2020} identify a smaller set of size $const^{\sqrt{n}\log{n}}$ of restricted Subtraction games which are solvable by an exact quantum algorithm in time $O\left(n^{1.5}\right)$.
Deterministic algorithms for both classes of games require $\Omega\left(n^2\right)$ steps for solving.
(Hereafter $n+1$ stands for the number of positions in a game, regardless which of the players has to make the next move.)

A \emph{Subtraction game} is similar to a canonical Nim game \cite{f2000} in several senses.
Nim is a notable game in game theory because it traditionally serves as a ``base case'' for Sprague-Grundy theorem \cite{s1935,g1939}, which establishes a deterministic upper bound for solving many combinatorial games.
Similarly, Subtraction games seem to become a good candidate for being a ``base sample'' for game-solving Quantum Dynamic Programming.
And like many games are known to be reducible to some Nim games, also many games on graphs can be reduced to the corresponding Subtraction games.
Finally, the rules of these games have very similar definitions.

The difference between these two games is that in a Subtraction game, the players deal with just one heap of stones, though with certain limitations imposed on the number of stones they can take from the heap.
The most common limitation for Subtraction games is defining a maximum for the number of stones to be taken away, and this kind of Subtraction game has very fast deterministic solutions in $\tilde{O}\left(1\right)$.
Here we study a much more general class of such limitations and thus a broader class of Subtraction games.

We investigate algorithms for \emph{solving} Subtraction games, that is determining the payoffs of all the players, assuming each of them to play optimally.
We exploit techniques similar to ones in \cite{kks2019} and \cite{hyzl2020} to establish upper bounds for the quantum complexity and asymptotes for the classical complexity.

The paper is organized in the following way. Section \ref{sec:defs} contains basic definitions. In Section \ref{sec:balanced-substr}, we present evaluations of classical complexity and quantum algorithms for solving a special class of games which we call balanced Subtraction games. Finally, in Section \ref{sec:restricted-substr} we analyze the complexity of solving the so called restricted Subtraction games.

\section{Definitions}\label{sec:defs}

\subsection{Subtraction Games}

In a play of a \emph{Subtraction game} players $1, \ldots, k$ sequentially remove some positive amounts of stones from a heap, with player $l$ being followed by player $\left(l\bmod{k}+1\right)$.

Let $n$ be the initial number of stones in the heap, and $\Gamma$ be a lower-triangular {\em binary} matrix of size $n \times n$, with rows numbered from $1$ to $n$ and columns numbered from $0$ to $n-1$.
A player which has to make the next move, can remove $j-i$ stones ($0 \le i < j \le n$) from the heap with exactly $j$ stones left iff $\Gamma_{ji}=1$.
In simple terms, $\Gamma_{ji}$ indicates the possibility for a player to receive position ``$j$ stones'' from the predecessor and pass position ``$i$ stones'' to the follower on the next turn.

If in some position a player, say player $l_0$, cannot make a legal move, then the play ends, and each player $l$ receives their payoff $\$\left(l-l_0+k\right)\bmod k$.
That is, player $l_0$ is the loser, the previous player is the major winner with payoff $\$\left(k-1\right)$, the previous-to-previous player gets $\$\left(k-2\right)$ and so on.
In order to become the major winner, a player has to take all the remaining stones, or to leave a number of stones $j$ such that no allowed moves would remain: $\sum\limits_{i=0}^{j-1} \Gamma_{ji} = 0$.

Obviously, the rules of a Subtraction game are fully determined by such matrix $\Gamma$, so hereafter we use letter $\Gamma$ to denote a corresponding game.
We also reserve the name $n$ to denote the initial number of stones. This number $n$ also corresponds to the dimension of the matrix $\Gamma$.
Note that there are only $n\left(n+1\right)/2$ meaningful bits in the matrix $\Gamma$, as a player cannot increase the number of stones in the heap or leave it as is: $\Gamma_{ji}=0$ for all $j \le i$.

Finally, we note that the selected payoff function is not a must.
It may be arbitrary, provided that each player has strict preferences over the set of all $k$ possible endings.
Otherwise, if some preferences are not strict, the concept of optimal behavior will not be well-defined, and the required assumption of optimal players will fail.

\subsection{Winning Function}
Let  ${\cal G}$ be a set of lower-triangular binary matrices of size $n \times n$, with rows numbered from $1$ to $n$ and columns numbered from $0$ to $n-1$.

We define a winning function $\textsc{Win}:{\cal G}\otimes {\cal N}_n \to \{\$w\}_{0\le w<k}$, such that $\textsc{Win}\left(\Gamma,j\right)=\$w$ iff a player gets payoff $\$w$ given position ``$j$ stones'' in game $\Gamma$, under assumption of optimal players:
$$
\textsc{Win}\left(\Gamma,j\right)=\begin{cases}
    \$0 & \text{if $j=0$,} \\
    \$0 & \text{if $\sum_i\Gamma_{ji}=0$,} \\
    \displaystyle\$\left(\max_{\substack{i : \Gamma_{ji}=1}} \textsc{Win}\left(\Gamma,i\right)-1\right) \bmod k & \text{otherwise.}
\end{cases}
$$
We also use notation $\textsc{Win}\left(\Gamma\right)=\textsc{Win}\left(\Gamma,n\right)$ for the value of game $\Gamma$.

\subsection{Properties of Subtraction Games}

In this work we stick to the conventional terminology of \cite[Section 2.3]{kks2019} and \cite[Section 2.1]{hyzl2020}, and use the following definitions.

We call a game $\Gamma$ \emph{losing} if the first player loses it assuming other players are optimal:
\begin{equation}\label{eq:losing}
\textsc{Win}\left(\Gamma\right)=\$0.
\end{equation}

We call a game $\Gamma$ \emph{balanced} if the values of $\textsc{Win}\left(\Gamma,j\right)$ are uniformly distributed over the set $\left\{\$w\right\}_{0 \le w < k}$:
\begin{equation}\label{eq:balanced}
\forall w: \Big|\#\big\{j:{\textsc{Win}\left(\Gamma,j\right)=\$w}\big\}_{1 \le j \le n} - \frac{n}{k}\Big| \le o\left(\frac{n}{k}\right).
\end{equation}

When considering a random balanced game we hereafter implicitly bear in mind the following procedure of picking a game:
\begin{enumerate}
\item Assign each position ``$j$ stones'' one of the values from $\left\{\$0,\$1,\ldots,\$\left(k-1\right)\right\}$ in the uniform fashion.
\item Assign position ``$0$ stones'' value $\$0$.
\item Initially assign $\Gamma = \left[0\right]_{ji}$.
\item For each position ``$j$ stones'' with value $\$w$ put $\Gamma_{ji}=1$ with probability $1/2$ whenever $i<j$ and position ``$i$ stones'' is assigned one of the values $\$\left(w+1\right) \bmod k,\quad \$w,\quad \$\left(w-1\right),\quad \ldots,\quad \$1$.
\item Additionally, for each position ``$j$ stones'' with value $\$w$ put $\Gamma_{ji}=1$ for one $i$ such that position ``$i$ stones'' is assigned value $\$\left(w+1\right) \bmod k$, whenever it was not already done in the previous step.
\item If the previous step failed because for some position ``$j$ stones'' it is not possible to find a position ``$i$ stones'' with the appropriate value, then start everything from the beginning.
\end{enumerate}
Should one feel that discarding in the last step essentially destroys the uniformity of $\textsc{Win}\left(\Gamma,j\right)$,
they can at the second step assign each position ``$w$ stones'', $0 \le w < k$, value $\$\left(k-w\right) \bmod k$.
This will make the last step obsolete, as no failure can occur, and will preserve the perfect uniformity.
Our further observations are valid for either kind of picking a random balanced Subtraction game.

Finally, we call a game $\Gamma$ \emph{restricted} if in each position at most one move is possible:
\begin{equation}\label{eq:restricted}
\forall j: \sum_i \Gamma_{ji} \le 1.
\end{equation}

\subsection{Computational Model}
To evaluate the complexity of a quantum algorithm, we use the standard form of the quantum query model. It is a generalization of the decision tree model of classical computation that is commonly used to lower bound the amount of time required by a computation.

    Let $f:D\rightarrow \{0,1\},D\subseteq \{0,1\}^n$ be an $n$ variable function we wish to compute on an input $x\in D$. We have an oracle access to the input $x$ --- it is realized by a specific unitary transformation usually defined as
    {%
    \relpenalty 100000
    \binoppenalty 100000
    $\ket{i}\ket{z}\ket{w}\rightarrow \ket{i}\ket{z \oplus x_i}\ket{w}$
    } where the $\ket{i}$ register indicates the index of the variable we are querying, $\ket{z}$ is the output register, and $\ket{w}$ is some auxiliary work-space. An algorithm in the query model consists of alternating applications of arbitrary unitaries independent of the input and the query unitary, and a measurement in the end. The smallest number of queries for an algorithm that outputs $f(x)$ with probability $\geq \frac{2}{3}$ on all $x$ is called the quantum query complexity of the function $f$ and is denoted by $Q(f)$. In this paper, as running time of an algorithm, we mean a number of queries to oracle. 
    
    More information on quantum computation and query model can be found in \cite{a2017,aazksw2019part1}.
    
    To distinguish ordinary deterministic and randomized complexities from the quantum complexity, they are traditionally called by one term \emph{classical complexity}.
    
\section{Balanced Subtraction Games}\label{sec:balanced-substr}

\subsection{Classical Query Complexity}\label{sec:classical-query-balanced}

In this subsection we limit our considerations with the number of players $k=3$, for the sake of simplicity:
\begin{itemize}
\item a player who cannot make a move at the end of a play gets $\$0$;
\item the previous player who managed to make the last move gets $\$2$;
\item the previous-to-previous player gets $\$1$.
\end{itemize}
All similar results also hold for arbitrary $k$ but require larger formulations, which in our mind are redundant for understanding. 

\begin{lemma}\label{sensitivitylemma}
Let $\Gamma$ be a losing balanced Subtraction game $\Gamma$ picked uniformly at random.
Let game $\Gamma'$ differ from $\Gamma$ in exactly one random bit of their binary representations: \sloppy{$\textsc{HammingDistance}\left(\Gamma,\Gamma'\right)=1$}.
\sloppypar{Then $\Pr\big[\textsc{Win}\left(\Gamma'\right)\neq\textsc{Win}\left(\Gamma\right)\big]\ge~1/18$}.
\end{lemma}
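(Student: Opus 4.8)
The plan is to exhibit $\Omega(n^2)$ single-bit modifications of $\Gamma$ that flip the game value, and to show that on average they make up at least a $1/18$ fraction of all $n(n+1)/2$ meaningful bits. Write $v_p=\textsc{Win}(\Gamma,p)$; with $k=3$, a move into a $\$0$-, $\$1$-, or $\$2$-position gives the mover payoff $\$2$, $\$0$, $\$1$ respectively, and $v_p$ is the best of these over the moves available at $p$. Since $\Gamma$ is losing, $v_n=\$0$, and the recursion then forces every one-move successor of $n$ to have value $\$1$ (a move into a $\$0$- or $\$2$-position would give the mover more than $\$0$); moreover the generation procedure installs, for every $\$0$-position, a mandatory move to some $\$1$-position, so $n$ does have such a successor, and in the generation procedure a $\$1$-position is only ever given moves to $\$1$- or $\$2$-positions. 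Call a meaningful bit $\Gamma_{ji}$ \emph{good} when $\Gamma_{nj}=1$ (hence $v_j=\$1$) and $v_i=\$0$; note that every good bit equals $0$ in $\Gamma$.

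The first thing to prove is that flipping a good bit $\Gamma_{ji}$ to $1$ always changes the value of the game. Every position below $j$ keeps its move set, so by induction $\textsc{Win}(\Gamma',p)=v_p$ for all $p<j$; in particular the new move of $j$ leads to the $\$0$-position $i$, which already makes $\textsc{Win}(\Gamma',j)=\$2$, the maximal value. Since $j<n$, position $n$ retains all its moves, including the move to $j$, and that move now yields payoff $\$1$; hence $\textsc{Win}(\Gamma',n)\in\{\$1,\$2\}\neq\$0=\textsc{Win}(\Gamma)$. The clean point here — the one that makes the whole argument short — is that I never have to understand how the values of the positions strictly between $i$ and $n$ react to the flip: the mere existence of one move from $n$ into a freshly created $\$2$-position already certifies $\textsc{Win}(\Gamma',n)\neq\$0$, so there is no cascade to track.

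It then remains to count good bits. Conditioning on the value assignment $p\mapsto v_p$, balancedness gives $n/3\pm o(n)$ positions of each value, and in the uniform model the $v_p$ are essentially i.i.d.\ uniform, so the number of ordered pairs $i<j\le n-1$ with $v_i=\$0$ and $v_j=\$1$ concentrates at $\binom{n}{2}/9-o(n^2)=n^2/18-o(n^2)$. For each such pair, the probability-$\tfrac12$ step of the generation procedure sets $\Gamma_{nj}=1$ — turning the pair into a good bit — with probability $\ge\tfrac12$, independently over distinct $j$, so the expected number of good bits is at least $n^2/36-o(n^2)$; dividing by $n(n+1)/2$ gives $\Pr[\textsc{Win}(\Gamma')\neq\textsc{Win}(\Gamma)]\ge\tfrac1{18}-o(1)$, and the stated bound follows (the vanishing term can be absorbed, e.g.\ by also counting the $\Theta(n)$ good-type flips $\Gamma_{ni}$ with $v_i\in\{\$0,\$2\}$, or simply by taking $n$ large). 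I expect the only real obstacle to be this last step — controlling the $o(n/k)$ slack in the definition of ``balanced'' and the concentration of the pair count and of $n$'s out-neighbourhood; the conceptual heart of the lemma is the cascade-avoiding observation of the second paragraph.
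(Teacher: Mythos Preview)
Your proof is correct and follows essentially the same approach as the paper: both identify the same class of sensitive bits---those $(j,i)$ with $v_j=\$1$, $v_i=\$0$, and $\Gamma_{nj}=1$---and argue that flipping such a bit forces $\textsc{Win}(\Gamma',j)=\$2$ and hence $\textsc{Win}(\Gamma',n)\ge\$1$ via the surviving move $n\to j$. The paper organizes the count as a four-case analysis on the location $(j,i)$ of the flipped bit (the boundary cases $j=n$ and $i=0$ are shown to have sensitivity $\ge 1/6$, which is how the paper absorbs the $o(1)$ slack you flag at the end), but the substance and the key ``cascade-avoiding'' observation are identical.
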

\begin{proof}

First, we mention the following three facts about balanced losing games.
\begin{itemize}
\item Balancedness \eqref{eq:balanced} of $\Gamma$ implies:
      \begin{equation}\label{eq:uniformity}
      \Pr_{\substack{j}}\big[\textsc{Win}\left(\Gamma,j\right)=\$1\big] =
      \Pr_{\substack{j}}\big[\textsc{Win}\left(\Gamma,j\right)=\$0\big] = \frac{1}{3}.
      \end{equation}
\item Losingness \eqref{eq:losing} of $\Gamma$ implies for each $j$, $0 < j < n$:
      \begin{equation}\label{eq:either_or}
      \text{either $\Gamma_{nj}=0$ or $\textsc{Win}\left(\Gamma,j\right)=\$1$, or both;}
      \end{equation}
      otherwise the first player would be able to take $n-j$ stones in the first turn and thus have a positive payoff.
\item Assuming $\Gamma$ to be picked at random, losingness \eqref{eq:losing} implies
      \begin{equation}\label{eq:conditional_one_half}
      \Pr\big[\Gamma_{nj}=1 ~\mid~ \textsc{Win}\left(\Gamma,j\right)=\$1\big] = \frac{1}{2},
      \end{equation}
      since possibility or impossibility of a (worst possible) move which leads to position ``$j$ stones'' with $\textsc{Win}\left(\Gamma,j\right)=\$1$ does not affect the value of any position.
\end{itemize}

Now let $\Gamma'_{ji} \ne \Gamma_{ji}$ for some pair of indices $j,i$ picked at random, s.t. $0\le i<j\le n$.
Then we are interested in the value of
$$
\Pr_{\substack{j,i}}\big[\textsc{Win}\left(\Gamma'\right)\neq \textsc{Win}\left(\Gamma\right)\big].
$$
Let us consider four possible cases for $j$ and $i$ to evaluate this probability.
Readers who only care about large values of $n$, are welcome to skip all but the last case, as the former ones are highly unlikely to happen for random $j$ and $i$.
\begin{enumerate}

  \item $j=n$, and $i=0$.

        Inversion of bit $\Gamma_{n0}$ changes the value of game $\Gamma$ with certainty, since $\Gamma$ is losing \eqref{eq:losing}, so and $\Gamma_{n0}=0$,
        but $\Gamma'$ with $\Gamma'_{n0}=1$ can be won by taking all $n$ stones in the first turn:
        $$ \Pr\big[\textsc{Win}\left(\Gamma'\right)\neq \textsc{Win}\left(\Gamma\right)\big] = 1 $$ 

  \item $j=n$, and $i>0$.

        $\eqref{eq:uniformity} \land \eqref{eq:either_or}$ implies that
        $\Pr\big[\textsc{Win}\left(\Gamma,i\right) \neq \$1 \land \Gamma_{ni}=0\big] = 1 - \frac{1}{3} = \frac{2}{3}$.
        Under this condition, inversion of bit $\Gamma_{ni}$ changes the value of game $\Gamma$ from $\$0$ to $\$\left(\textsc{Win}\left(\Gamma,i\right)-1\right) \bmod 3 > \$0$:
        $$ \Pr\big[\textsc{Win}\left(\Gamma'\right)\neq \textsc{Win}\left(\Gamma\right)\big] = \frac{2}{3} $$ 

  \item $j<n$, and $i=0$.

        $\eqref{eq:uniformity} \land \eqref{eq:either_or} \land \eqref{eq:conditional_one_half}$ implies that
        $\Pr\big[\textsc{Win}\left(\Gamma,j\right)=\$1 \land \Gamma_{nj}=1\big] = \frac{1}{3} \times \frac{1}{2} = \frac{1}{6}$
        Here $\textsc{Win}\left(\Gamma,j\right)=\$1$ means that $\Gamma_{j0}=0$, and inversion of this bit changes the value of $\textsc{Win}\left(\Gamma,j\right)$ from $\$1$ to $\$2$.
        And $\Gamma_{nj}=1$ means that the value of $\textsc{Win}\left(\Gamma\right)$ is at least $\left(\textsc{Win}\left(\Gamma,j\right)-\$1\right) \bmod k$.
        These two facts together mean that inversion of bit $\Gamma_{j0}$ changes the value of $\textsc{Win}\left(\Gamma,n\right)$ from $\$0$ to $\$1$. Therefore:
        $$ \Pr\big[\textsc{Win}\left(\Gamma'\right)\neq \textsc{Win}\left(\Gamma\right)\big] = \frac{1}{6} $$ 

  \item $0 < i < j < n$.

        $\eqref{eq:uniformity} \land \eqref{eq:either_or} \land \eqref{eq:conditional_one_half}$ implies that
        $\Pr\big[\textsc{Win}\left(\Gamma,j\right)=\$1 \land \textsc{Win}\left(\Gamma,i\right)=\$0 \land \Gamma_{nj}=1\big] = \frac{1}{3} \times \frac{1}{3} \times \frac{1}{2} = \frac{1}{18}$.
        Here $\textsc{Win}\left(\Gamma,j\right)=\$1 \land \textsc{Win}\left(\Gamma,i\right)=\$0$ means that $\Gamma_{ji}=0$, and inversion of this bit changes the value of $\textsc{Win}\left(\Gamma,j\right)$ from $\$1$ to $\$2$.
        And $\Gamma_{nj}=1$ means that the value of $\textsc{Win}\left(\Gamma\right)$ is at least $\left(\textsc{Win}\left(\Gamma,j\right)-\$1\right) \bmod k$.
        These two facts together mean that inversion of bit $\Gamma_{ji}$ changes the value of $\textsc{Win}\left(\Gamma,n\right)$ from $\$0$ to $\$1$. Therefore:
        $$ \Pr\big[\textsc{Win}\left(\Gamma'\right)\neq \textsc{Win}\left(\Gamma\right)\big] = \frac{1}{18} $$ 

\end{enumerate}
In either case we have that $\Pr\big[\textsc{Win}\left(\Gamma'\right)\neq \textsc{Win}\left(\Gamma\right)\big] \ge \frac{1}{18}$.
\end{proof}

\begin{theorem}\label{th:classic}
There is no deterministic or randomized algorithm for solving function $\textsc{Win}$ faster than in $\Omega\left(n^2\right)$ steps.
\end{theorem}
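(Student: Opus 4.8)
The plan is to upgrade the pointwise estimate of Lemma~\ref{sensitivitylemma} into an average-case query lower bound over the distribution $\mu$ of uniformly random losing balanced games, folding the algorithm's internal randomness into the same probability space in the spirit of Yao's principle, so that the resulting bound covers randomized and, a fortiori, deterministic algorithms. The intuition is the usual sensitivity argument: the matrix $\Gamma$ carries $M:=n(n+1)/2$ meaningful bits, so if an algorithm reads only $o(M)$ of them, then a bit $b$ chosen uniformly among these $M$ is almost surely never queried; but an unqueried bit can be flipped without affecting the output, whereas Lemma~\ref{sensitivitylemma} says such a flip changes $\textsc{Win}(\Gamma)$ with probability at least $1/18$, so the algorithm cannot be correct on both $\Gamma$ and its perturbation.

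To make this precise, I would first record the elementary fact that a deterministic algorithm $A$ (or a randomized one with its coins fixed) that does not query bit $b$ on input $x$ follows the identical execution on $x$ and on the perturbed input $x^{\oplus b}$ and hence produces the identical output; equivalently, if $\textsc{Win}(x)\ne\textsc{Win}(x^{\oplus b})$ and $A$ is correct on both inputs, then $A$ must query $b$ on input $x$. Next let $R$ solve $\textsc{Win}$ with error at most $1/3$ using $T$ queries in the worst case; a constant number of independent repetitions with majority vote yields $R'$ that errs with probability at most a chosen constant $\delta$ on \emph{every} input and uses $O(T)$ queries. Sample $\Gamma\sim\mu$, a uniform meaningful bit $b$, and coins $r$ for $R'$. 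Let $E_2$ be the event $\textsc{Win}(\Gamma)\ne\textsc{Win}(\Gamma^{\oplus b})$ and $E_1$ the event that $R'$ with coins $r$ is correct on both $\Gamma$ and $\Gamma^{\oplus b}$. Lemma~\ref{sensitivitylemma} gives $\Pr[E_2]\ge 1/18$, and a union bound over the two inputs (each carrying error at most $\delta$) gives $\Pr[\neg E_1]\le 2\delta$, whence $\Pr[E_1\cap E_2]\ge 1/18-2\delta$, a positive constant once $\delta<1/36$. On $E_1\cap E_2$ the fact above forces $b$ to be among the bits queried by $R'$ on input $\Gamma$ with coins $r$; hence, averaging over $\Gamma$, $r$ and $b$, the probability that a uniformly random meaningful bit is queried is at least $1/18-2\delta$. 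On the other hand, for each fixed $\Gamma$ and $r$ the conditional probability of this event is exactly the number of distinct queried bits divided by $M$, which is at most $O(T)/M$, so the overall probability is at most $O(T)/M$ as well. Comparing the two bounds gives $T=\Omega(M)=\Omega(n^2)$, and since a deterministic algorithm is the trivial special case, the bound holds in both models.

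The step I expect to need the most attention is precisely this calibration: the sensitivity constant $1/18$ is smaller than the permitted error $1/3$, so without the preliminary constant-factor amplification pushing the error below $1/36$ the counting is vacuous. A minor point to check is that $\Gamma^{\oplus b}$ need not itself be losing or balanced and so may fall outside the support of $\mu$; this is harmless, since an algorithm solving $\textsc{Win}$ must be correct on \emph{every} Subtraction game, so the error guarantee of $R'$ applies to $\Gamma^{\oplus b}$ as well. One should also note that losing balanced games exist for all large $n$ --- e.g.\ through the non-discarding variant of the sampling procedure of Section~\ref{sec:defs} --- so $\mu$ is well defined, and that everything is written for $k=3$ exactly as in Lemma~\ref{sensitivitylemma}, the general case being identical with $1/18$ replaced by another positive constant.
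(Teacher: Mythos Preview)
Your proposal is correct and follows essentially the same approach the paper has in mind: the paper does not spell out the argument but states that the theorem is a near-obvious consequence of Lemma~\ref{sensitivitylemma} and refers to \cite[Theorem~1]{kks2019} for an identical proof. Your detailed sensitivity argument---including the amplification to push the error below $1/36$ and the observation that $\Gamma^{\oplus b}$ need not lie in the support of $\mu$---is precisely the kind of fleshing-out the paper defers to that reference for.
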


The proof of this theorem is next to the obvious implication of Lemma~\ref{sensitivitylemma} and is identical to \cite[Theorem~1]{kks2019}.
We refer to the aforementioned work for the details and for the remarks on the eligibility of this proof, which are also relevant here.

\subsection{Quantum Algorithm}\label{sec:quantum-query-balanced}

In this subsection we suggest a quantum algorithm for solving an arbitrary $k$-player Subtraction game.
We assume the reader to be familiar with the basics of quantum computing and, in particular, with Grover Search algorithm \cite{g1996,bbht1998}.
Among other problems, this algorithm is also applicable for searching in directed acyclic graphs (DAGs) \cite{ks2019,cormen2001}.
In this paper we apply it to Subtraction games which essentially are games on DAGs: if a game-representing binary $\Gamma$ is treated as an adjacency matrix of DAG $G=\left(V,E\right)$, then its set $V$ corresponds to $n$ positions of the game, and its set $E$ corresponds to all the legal moves.

The algorithm determines an optimal move in each possible position, thus providing a strong solution of a game.
But as we are interested in the value of $\textsc{Win}\left(\Gamma\right)$ only, we provide a solution which only returns the value of a game.
The algorithm searches for the maximum among directly accessible vertices $\textsc{Adj}\left[j\right] \stackrel{\text{def}}{=} \left\{i: \Gamma_{ji}\right\}$.
This algorithm has two important properties:
\begin{itemize}
\item its expected running time is $O\left(\sqrt{\deg{j}}\right)$, where $\deg{j} \stackrel{\text{def}}{=} \big|\textsc{Adj}\left[j\right]\big|$ is the number of vertices directly accessible from the vertex $j$;
\item it returns a vertex $i'$ with the maximal value of $\textsc{Win}$ with a constant probability (say $0.5$) if there exist one or more vertices with maximal values.
\end{itemize}
In Algorithm \ref{alg:quantumalgo}, we use D{\"u}rr-H{\o}yer Algorithm \cite{dh1996} for minimum search in the form of $\textsc{Grover\_Max}$ subroutine which returns maximum value among its arguments.
We store the search results in the array $w$ and reuse them in all the subsequent searches.

\begin{algorithm}
\caption{Quantum algorithm for solving a $k$-player Subtraction game $\Gamma$}\label{alg:quantumalgo}
\begin{algorithmic}
\State $w_0 \gets \$0$
\For{$j=1\ldots n$}\Comment{$O\left(n\right)$}
    \State $w_j \gets \$0$
    \For{$z=1\ldots 2 \cdot \log_2{n}$}\Comment{$O\left(\log{n}\right)$}
        \State $w_j \gets \max\left(w_j, \textsc{Grover\_Max} \left\{\$\left(w_i-1\right) \bmod k \mid i \in \textsc{Adj}\left[j\right]\right\}\right)$\Comment{$O\left(\sqrt{\deg{j}}\right)$}
    \EndFor
\EndFor
\State
\Return $w_n$
\end{algorithmic}
\end{algorithm}
\begin{theorem}\label{th:general-k-players}
\sloppy Algorithm~\ref{alg:quantumalgo} computes $\textsc{Win}\left(\Gamma\right)$ in expected running time $O\left(\sqrt{n\left|E\right|}\log n\right)$ and with error probability $\epsilon \lesssim 1/n$.
\end{theorem}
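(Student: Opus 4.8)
The plan is to separate correctness from running time, and for correctness first to analyze the \emph{idealized} execution in which every call to $\textsc{Grover\_Max}$ returns the true maximum of the multiset it is given, and then to control the probability that some call deviates from this ideal. For the idealized run I would prove by induction on $j$ that $w_j=\textsc{Win}\left(\Gamma,j\right)$. The base case is $w_0=\$0=\textsc{Win}\left(\Gamma,0\right)$. For the step, since $\Gamma$ is lower-triangular every $i\in\textsc{Adj}\left[j\right]$ satisfies $i<j$, so the values $w_i$ read during the $j$-th outer iteration have already been finalized and, by the induction hypothesis, equal $\textsc{Win}\left(\Gamma,i\right)$. If $\textsc{Adj}\left[j\right]=\emptyset$, i.e.\ $\sum_i\Gamma_{ji}=0$, then $w_j$ stays $\$0$, matching the second branch of the definition of $\textsc{Win}$. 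Otherwise every $\$\left(w_i-1\right)\bmod k$ is $\ge\$0$, so taking $\max$ against the initial $\$0$ is harmless, and after the inner loop $w_j=\max_{i:\Gamma_{ji}=1}\$\left(\textsc{Win}\left(\Gamma,i\right)-1\right)\bmod k=\textsc{Win}\left(\Gamma,j\right)$, which is the third branch. Hence the idealized algorithm returns $w_n=\textsc{Win}\left(\Gamma\right)$.

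For the error probability, fix $j$ and let $M_j$ be the maximum of the multiset $\left\{\$\left(w_i-1\right)\bmod k:i\in\textsc{Adj}\left[j\right]\right\}$ actually formed in the $j$-th iteration (put $M_j=\$0$ if $\textsc{Adj}\left[j\right]=\emptyset$). By the two stated properties of $\textsc{Grover\_Max}$, each of its $2\log_2 n$ invocations in that iteration returns one of its argument values (hence $\le M_j$) and returns $M_j$ with probability at least $1/2$, independently across the $2\log_2 n$ invocations and regardless of the contents of the multiset. Since $w_j$ ends as the $\max$ of those returned values together with $\$0\le M_j$, the event $S_j:=\{w_j=M_j\}$ fails only when all $2\log_2 n$ invocations miss $M_j$, so $\Pr\left[\neg S_j\right]\le\left(1/2\right)^{2\log_2 n}=1/n^2$, and this bound is unconditional because it depends only on the fresh randomness of iteration $j$. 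A plain union bound then gives $\Pr\left[\bigcup_{j=1}^n\neg S_j\right]\le n\cdot n^{-2}=1/n$, and on the event $\bigcap_j S_j$ the induction of the previous paragraph applies verbatim (the ``ideal subroutine'' assumption was used only through the conclusions ``$w_j=M_j$''), yielding $w_n=\textsc{Win}\left(\Gamma\right)$. Thus the algorithm errs with probability $\epsilon\le 1/n$. This error step is the part that needs care: one must phrase the per-iteration bad event so it depends only on that iteration's randomness, otherwise a conditioning argument would have to pay for the propagation of earlier mistakes.

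For the running time, the loop ranges $j=1,\ldots,n$ and $z=1,\ldots,2\log_2 n$ are fixed, so the multiset of subroutine calls performed does not depend on the random outcomes and linearity of expectation applies directly: the expected number of queries is $\sum_{j=1}^n 2\log_2 n\cdot O\!\left(\sqrt{\deg j}\right)=O\!\left(\log n\right)\sum_{j=1}^n\sqrt{\deg j}$. By Cauchy--Schwarz, $\sum_{j=1}^n\sqrt{\deg j}\le\sqrt{n\sum_{j=1}^n\deg j}=\sqrt{n\left|E\right|}$, where $\sum_j\deg j=\sum_j\big|\textsc{Adj}\left[j\right]\big|$ equals the number of $1$'s in $\Gamma$, i.e.\ $\left|E\right|$. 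Hence the expected running time is $O\!\left(\sqrt{n\left|E\right|}\log n\right)$, the per-invocation $O(1)$ bookkeeping contributing only a lower-order $O(n\log n)$ term, which completes the proof.
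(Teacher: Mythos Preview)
Your proof is correct and follows essentially the same approach as the paper: induction on $j$ for correctness, the per-iteration failure bound $(1/2)^{2\log_2 n}=1/n^2$ aggregated over $n$ iterations for the error probability, and Cauchy--Schwarz on $\sum_j\sqrt{\deg j}$ for the running time. Your write-up is in fact more careful than the paper's (explicit induction, clean per-iteration independence, and a correct application of Cauchy--Schwarz rather than the paper's somewhat garbled intermediate step), but the underlying argument is the same.
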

\begin{proof}
The correctness of the algorithm is obvious: each of the variables $w_j$ (for $j$ running from $1$ to $n$) is assigned a value $\textsc{Win}\left(\Gamma,j\right)$ according to the definition of the function $\textsc{Win}\left(\Gamma,j\right)$.

The time complexity follows from Cauchy-Bunyakovsky-Schwarz inequality:
$$ \sum_{j=1}^{n}{\sqrt{\deg{j}}} \le \sum_{j=1}^{n}{\sqrt{\mathbb{E}_j\left[\deg{j}\right]}}=\sum_{j=1}^{n}{\sqrt{\left|E\right|/n}}=\sqrt{n\left|E\right|}.$$

\sloppy{The probability of error in evaluating one particular $w_j$ is ${2^{-2\log_2 n}=1/{n^2}}$, so the probability of no error at all among evaluations of $w_1, \ldots, w_n$ is ${\left(1-1/{n^2}\right)^n \gtrsim 1-1/n}$.}

\end{proof}

We note that, for a random Subtraction game, the expected number of edges $\mathbb{E}\big[\left|E\right|\big] = \Theta\left(n^2\right)$, and then we conclude that, while the best classical algorithms require time $\Theta\left(n^2\right)$ to solve a Subtraction game, there exists a polynomially faster quantum algorithm which runs in time $O\left(n^{3/2}\log{n}\right)$.

\paragraph{The exact-time algorithm for a small number of players.}
If $k$ is a small constant, one can apply a quantum algorithm that works in \emph{exact} time $O(\sqrt{n|E|}\log n)$ (in contrast to Algorithm \ref{alg:quantumalgo} which has the same evaluation for the \emph{expected} running time). Algorithm \ref{alg:threeplayer} runs Grover's Search $k-1$ times instead of running one search for the maximum. At the $t$-th step the value $t$ is to be searched for among the values from the adjacent vertices, for $t$ running from $\$\left(k-1\right)$ down to $\$0$. Obviously, the first found value is equal to the maximal payoff available in the considered position ``$j$ stones''. Subroutine $\textsc{GROVER}_t$ in Algorithm \ref{alg:threeplayer} searches for the value $\$t$ among its arguments and returns $True$ with probability $0.5$ when there is such value, and $False$ otherwise. It has to be run $2\cdot \log_2 k \cdot \log_2 n$ times to amplify the probability of success in case if the arguments contain value $\$t$.

\begin{algorithm}
\caption{Quantum algorithm for solving a $k$-player Subtraction game $\Gamma$, where $k$ is a small constant}\label{alg:threeplayer}
\begin{algorithmic}
\State $w_0 \gets \$0$
\For{$j=1\ldots n$}\Comment{$O\left(n\right)$}
  \State $w_j\gets \$0$
  \State $t\gets \$\left(k-1\right)$
  \While{$t>\$0 ~~\land~~ w_j=\$0$}
    \For{$z=1\ldots 2 \cdot \log_2{n}\cdot \log_2 k$}\Comment{$O\left(\log{n}\right)$} 
      \If{$\textsc{Grover}_t \left\{\$\left(w_i-1\right) \bmod k \mid i \in \textsc{Adj}\left[j\right]\right\}$} \Comment{$O\left(\sqrt{\deg{j}}\right)$} 
        \State $w_j \gets t$
      \EndIf
    \EndFor
    \State $t \gets t-\$1$
  \EndWhile
\EndFor
\State \Return $w_n$
\end{algorithmic}
\end{algorithm}

\begin{theorem}
If $k$ is a small constant, Algorithm~\ref{alg:threeplayer} computes $\textsc{Win}\left(\Gamma\right)$ in exact running time $O\left(\sqrt{n\left|E\right|}\log n\right)$ and with error probability $\epsilon \lesssim 1/n$.
\end{theorem}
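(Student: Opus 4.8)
The plan is to reuse the architecture of the proof of Theorem~\ref{th:general-k-players} almost verbatim and to isolate the one place where the two arguments differ: Algorithm~\ref{alg:threeplayer} never calls the D{\"u}rr--H{\o}yer minimum-search routine (whose $O(\sqrt{\deg j})$ cost is only an \emph{expectation}), but instead performs, for each candidate payoff $t$, a fixed number $2\log_2 n\log_2 k$ of Grover-type existence tests $\textsc{Grover}_t$, each of which halts after a number of oracle queries bounded by $O(\sqrt{\deg j})$ \emph{deterministically}, at the cost of a one-sided (false-negative) error. Replacing an expected-time/zero-error primitive by a worst-case-time/bounded-error one is exactly what turns ``expected running time'' into ``exact running time''; the resulting error is then absorbed by the $\Theta(\log n)$ amplification loop and a union bound, precisely as in Theorem~\ref{th:general-k-players}.

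\emph{Correctness.} Observe first that $\textsc{Grover}_t$ produces no false positives --- a candidate index is verified against the oracle before $True$ is returned --- so a call to $\textsc{Grover}_t$ reports $True$ only if the value $\$t$ actually occurs among $\{\$(w_i-1)\bmod k : i\in\textsc{Adj}[j]\}$. I would then condition on the event $\mathcal{E}$ that, for every $j$, at least one of the $2\log_2 n\log_2 k$ calls to $\textsc{Grover}_{t^\ast}$ on the true maximum value $t^\ast:=\max_{i\in\textsc{Adj}[j]}\bigl((w_i-1)\bmod k\bigr)$ reports $True$, and prove by induction on $j$ that under $\mathcal{E}$ one has $w_j=\textsc{Win}(\Gamma,j)$. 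The base case is $w_0=\$0$; in the inductive step the \textbf{while} loop scans $t$ downward from $\$(k-1)$, skips (by the no-false-positive property) every $t>t^\ast$, and stops at $t=t^\ast$ with $w_j$ set to $t^\ast$, which by the inductive hypothesis equals the value the recursive definition of $\textsc{Win}$ assigns to position ``$j$ stones''; and if $t^\ast=\$0$ the loop terminates with $w_j=\$0$, matching the degenerate branches of the definition.

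\emph{Running time and error bound.} Since $k$ is a constant, for each $j$ the \textbf{while} loop runs at most $k-1=O(1)$ times, the inner \textbf{for} loop runs $2\log_2 n\log_2 k=\Theta(\log n)$ times, and each $\textsc{Grover}_t$ invocation costs $O(\sqrt{\deg j})$ queries in the \emph{worst case}; hence the total query count is deterministically at most $O(\log n)\sum_{j=1}^{n}\sqrt{\deg j}$, and the same Cauchy--Bunyakovsky--Schwarz estimate $\sum_j\sqrt{\deg j}\le\sqrt{n|E|}$ used for Theorem~\ref{th:general-k-players} yields the exact bound $O\!\left(\sqrt{n|E|}\log n\right)$. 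For $\Pr[\mathcal{E}]$: the computation of a given $w_j$ can fail only if all $2\log_2 n\log_2 k$ calls to $\textsc{Grover}_{t^\ast}$ fail (calls with $t>t^\ast$ never err), which, since each succeeds with probability at least $1/2$, happens with probability at most $2^{-2\log_2 n\log_2 k}=n^{-2\log_2 k}\le n^{-2}$ for $k\ge 2$; a union bound over the $n$ positions gives $\Pr[\mathcal{E}]\ge(1-n^{-2\log_2 k})^{n}\gtrsim 1-1/n$, i.e. $\epsilon\lesssim 1/n$.

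The only step that I expect to need genuine care is the worst-case bound on $\textsc{Grover}_t$ when the number of marked arguments is unknown a priori: one must appeal to the standard fact that existence detection admits an implementation with a deterministically capped iteration schedule of length $O(\sqrt{\deg j})$ that still detects a marked element with constant probability, so that randomness governs only how much of that ceiling is actually spent. Granting this, the statement is a routine adaptation of Theorem~\ref{th:general-k-players}.
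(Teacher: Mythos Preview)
Your proposal is correct and is exactly the argument the paper has in mind: the paper gives no separate proof for this theorem, relying instead on the preceding paragraph (replace the expected-time D{\"u}rr--H{\o}yer call by at most $k-1$ deterministically-capped Grover detections, amplified $2\log_2 n\log_2 k$ times) together with the proof of Theorem~\ref{th:general-k-players}. Your write-up simply makes explicit the pieces the paper leaves implicit --- the no-false-positive property of $\textsc{Grover}_t$, the downward scan stopping at $t^\ast$, the worst-case (rather than expected) $O(\sqrt{\deg j})$ cost per call, the same Cauchy--Schwarz bound $\sum_j\sqrt{\deg j}\le\sqrt{n|E|}$, and the union bound yielding $\epsilon\lesssim 1/n$ --- so there is nothing to contrast.
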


\section{Restricted Subtraction Games}\label{sec:restricted-substr}

\subsection{Concept of Solution}

Restricted Subtraction games are, in some sense, degenerated, as players essentially have no choice in either position.
Nevertheless, these games are of natural interest in terms of the computational complexity of Boolean functions.
Namely, they became the first functions formulated in terms of a game, that demonstrate polynomial separation between exact quantum query complexity and classical query complexity.

We note that the adaptive deterministic query complexity of $\textsc{Win}\left(\Gamma\right)$ for a restricted Subtraction game $\Gamma$ is $n$. The upper bound $n$ follows from a very simple analysis of Algorithm \ref{alg:restricted-simple}, the lower bound $n$ also is obvious.
\begin{algorithm}
\caption{Computing $\textsc{Win}\left(\Gamma\right)$ for a restricted Subtraction game $\Gamma$}\label{alg:restricted-simple}
\begin{algorithmic}
\State $w_n \gets \$0$
\State $j \gets n$
\For{$i=n-1,\ldots,0$}
  \If{$\Gamma_{ji}=1$}
    \State $j \gets i$
    \State $w_n \gets \$\left(w_n-1\right) \bmod k $
  \EndIf
\EndFor
\State \Return $w_n$
\end{algorithmic}
\end{algorithm}

Instead of computing $\textsc{Win}\left(\Gamma\right)$ for a restricted Subtraction game $\Gamma$, \cite{hyzl2020} aims for a more ambitious problem of solving all positions of a game, i.e. of finding vector $W=\big[\textsc{Win}\left(\Gamma,j\right)\big]_j$.

\subsection{Classical Query Complexity}
\begin{theorem}\label{th:restricted-classical}
Classical query complexity of computing $\big[\textsc{Win}\left(\Gamma,j\right)\big]_j$ for a $k$-player restricted Subtraction game $\Gamma$ is $\Theta\left(n^2\right)$.
\end{theorem}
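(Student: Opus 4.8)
The plan is to establish the two bounds separately. For the upper bound $O(n^2)$, I would observe that the entire matrix $\Gamma$ of a restricted game contains at most $n(n+1)/2$ meaningful bits, and reading all of them suffices: once $\Gamma$ is fully known, every $\textsc{Win}(\Gamma,j)$ can be computed without further queries by a straightforward dynamic programming pass from $j=1$ up to $j=n$ (in each position there is at most one outgoing edge, so the recursion is trivial). Hence $O(n^2)$ queries always suffice, even deterministically.

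The substance is the lower bound $\Omega(n^2)$, and here the natural approach is a sensitivity/adversary argument analogous to Lemma~\ref{sensitivitylemma} and Theorem~\ref{th:classic}, but adapted to the vector-valued output $W=\big[\textsc{Win}(\Gamma,j)\big]_j$ and to the restricted setting. First I would fix a suitable distribution on restricted games: assign to each position ``$j$ stones'' its target value, then for each $j$ independently choose the unique $i<j$ that receives $\Gamma_{ji}=1$ (if any) so as to realize the intended $\textsc{Win}(\Gamma,j)$, i.e.\ pick $i$ with $\textsc{Win}(\Gamma,i)=\$\big(\textsc{Win}(\Gamma,j)+1\big)\bmod k$, or leave row $j$ empty to make position $j$ a $\$0$ position. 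The key point is that flipping the single ``active'' bit in row $j$ — turning the one legal move off, or redirecting it to a different column $i'$ whose value differs appropriately — changes $\textsc{Win}(\Gamma,j)$, hence changes the $j$-th coordinate of the output vector $W$. Thus the function $W$ has sensitivity $\Omega(1)$ on a constant fraction of the $\Theta(n^2)$ bit positions simultaneously, which by the standard block-sensitivity / randomized-query lower bound (or directly by the same Yao-minimax argument invoked in \cite[Theorem~1]{kks2019}) forces $\Omega(n^2)$ queries for any randomized algorithm that computes all coordinates with bounded error.

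More carefully, I would argue that for a random restricted game drawn from the above distribution, each row $j$ with $2 \le j$ independently has, with constant probability, the property that its active bit is ``pivotal'': there are at least two columns $i$ below $j$ with the two relevant values $\$(w+1)\bmod k$ and something other than $\$(w+1)\bmod k$ available, so that the adversary retains a genuine choice, and any algorithm that has not queried enough entries of row $j$ cannot determine $\textsc{Win}(\Gamma,j)$. Summing the information the algorithm must extract across the $\Theta(n)$ rows, each requiring $\Omega(n)$ queries in the worst case over the adversary's choice, yields the $\Omega(n^2)$ bound. The cleanest packaging is to reduce from $n$ independent copies of an $\Omega(n)$-query search-type problem, one per row, and invoke a direct-sum theorem for randomized query complexity.

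The main obstacle I anticipate is handling the dependencies introduced by the restricted structure and by the requirement that each row's target value actually be achievable — i.e.\ that some column $i$ with the correct value exists below $j$. For small $j$ this can fail, and the ``discard and restart'' fix used for balanced games must be re-examined here; I expect one must either restrict attention to rows $j \ge \varepsilon n$ (still $\Theta(n)$ of them, each with $\Theta(n)$ candidate columns, so the $\Omega(n^2)$ count survives) or, following the alternative picking procedure in the excerpt, hard-wire the values of the first $k$ positions so that every later row has a valid choice. Ensuring the per-row lower bounds genuinely compose — that an algorithm cannot amortize queries across rows — is the delicate step, and it is resolved exactly as in \cite[Theorem~1]{kks2019}: the bit that must be learned in row $j$ is independent of everything in the other rows, so no amortization is possible.
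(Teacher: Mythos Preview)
Your proposal is essentially correct, and in fact more detailed than the paper itself, which gives no argument at all: it simply states that the proof of \cite[Theorem~1]{hyzl2020} for two-player restricted games carries over unchanged to $k$ players. Your core idea --- reduce computing the vector $W$ to $\Theta(n)$ independent row-wise search problems of size $\Theta(n)$ each, then invoke a direct-sum bound --- is the natural route and presumably the one taken in \cite{hyzl2020}.

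Two minor remarks. First, you repeatedly point to \cite[Theorem~1]{kks2019} and to Lemma~\ref{sensitivitylemma}/Theorem~\ref{th:classic}, which treat the \emph{balanced} case; the paper instead cites \cite{hyzl2020} for the restricted case, and the arguments, while related, are not identical. Second, your random-target-value distribution mimicking the balanced-game construction is overkill here and is the source of the ``obstacle'' you anticipate. A much simpler hard family suffices: fix all rows in the bottom half to be empty (so positions $0,\ldots,\lfloor n/2\rfloor$ all have value $\$0$), and let each row $j$ in the top half independently be either empty or contain a single $1$ in one of its $\Theta(n)$ lower columns. Then $\textsc{Win}(\Gamma,j)$ for each top-half $j$ is exactly the OR of that row, the rows are genuinely independent, and no achievability or discard-and-restart issue arises. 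This is almost certainly closer to what \cite{hyzl2020} actually does, and it works verbatim for any $k$.
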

The proof is identical to one of \cite[Theorem 1]{hyzl2020}, which was formulated for the two-player games, but is valid also for the multiplayer games.

\subsection{Quantum Algorithm}
To solve a restricted Subtraction game, we modify Algorithm \ref{alg:quantumalgo} according to the idea from \cite{hyzl2020}. We run \emph{exact} Grover's search for a non-zero element in each row. Exact Grover's search \cite{l2001} is a modification of Grover's algorithm, which returns the position of the non-zero element in a binary string with Hamming weight 1, or $False$ if its Hamming weight is 0. It cannot handle strings with a bigger Hamming weight, but for the promised input, it works in exact time $O\left(\sqrt{n}\right)$ for an $n$-bit binary string, and with no errors. Subroutine $\textsc{Exact\_Grover}$ of Algorithm \ref{alg:oneedge} refers to the exact Grover's search. In contrast to Algorithms \ref{alg:quantumalgo} and \ref{alg:threeplayer}, this subroutine has to be called just once as its result is exact and does not need to be amplified.

\begin{algorithm}
\caption{Quantum algorithm for solving a restricted $k$-player game $\Gamma$}\label{alg:oneedge}
\begin{algorithmic}
\State $w_0 \gets \$0$
\For{$j=1\ldots n$}\Comment{$O\left(n\right)$}
  \State $t \gets  \textsc{Exact\_Grover} \left\{ \textsc{Adj}\left[j\right]\right\}$\Comment{$O\left(\sqrt{\deg{j}}\right)$}
  \State $w_j \gets \$\left(w_t-1\right) \bmod k$
\EndFor
\State
\Return $w$
\end{algorithmic}
\end{algorithm}
\begin{theorem}
Algorithm~\ref{alg:oneedge} computes $\textsc{Win}\left(\Gamma\right)$ in exact running time $O\left(n^{1.5}\right)$ and with no error.
\end{theorem}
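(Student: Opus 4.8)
The plan is to establish three things separately: correctness (the returned $w_n$ equals $\textsc{Win}\left(\Gamma\right)$, and in fact the whole array $w$ equals the vector $\big[\textsc{Win}\left(\Gamma,j\right)\big]_j$), the \emph{exact} running-time bound $O\left(n^{1.5}\right)$, and the absence of error.

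For correctness I would argue by induction on $j$ that right after the $j$-th iteration of the loop one has $w_j=\textsc{Win}\left(\Gamma,j\right)$. The base case $j=0$ holds since $w_0=\$0=\textsc{Win}\left(\Gamma,0\right)$. For the inductive step, fix $j\ge 1$ and recall that restrictedness \eqref{eq:restricted} means the $j$-th row of $\Gamma$ has Hamming weight at most $1$, which is exactly the promise under which $\textsc{Exact\_Grover}$ behaves correctly. If $\sum_i\Gamma_{ji}=1$, say $\Gamma_{ji_0}=1$ with $i_0<j$, then $\textsc{Exact\_Grover}$ returns $t=i_0$; since the maximum in the definition of $\textsc{Win}$ ranges over the single value $\textsc{Win}\left(\Gamma,i_0\right)$, the assignment $w_j\gets\$\left(w_{i_0}-1\right)\bmod k$ equals $\$\left(\textsc{Win}\left(\Gamma,i_0\right)-1\right)\bmod k=\textsc{Win}\left(\Gamma,j\right)$, where the inductive hypothesis is applied at index $i_0<j$. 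If instead $\sum_i\Gamma_{ji}=0$, then $\textsc{Exact\_Grover}$ returns $False$ and the loop records $w_j=\$0=\textsc{Win}\left(\Gamma,j\right)$, matching the second branch of the definition. Hence $w_j=\textsc{Win}\left(\Gamma,j\right)$ for all $j$, and in particular $w_n=\textsc{Win}\left(\Gamma,n\right)=\textsc{Win}\left(\Gamma\right)$.

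For the running time, the call to $\textsc{Exact\_Grover}$ in iteration $j$ is a search over the at most $n$ candidate positions of row $j$ under the Hamming-weight-$\le 1$ promise, so it costs $O\left(\sqrt{n}\right)$ queries (plus the verification query); using the tighter count $O\left(\sqrt{j}\right)$ for a row of length $j$ changes nothing asymptotically. Summing over the outer loop,
$$ \sum_{j=1}^{n} O\left(\sqrt{j}\right) \;\le\; \sum_{j=1}^{n} O\left(\sqrt{n}\right) \;=\; O\left(n^{1.5}\right). $$
The point to emphasize is that this is a worst-case (\emph{exact}) bound, not an expected one: exact Grover search runs for a fixed, input-independent number of iterations, and --- in contrast to Algorithms \ref{alg:quantumalgo} and \ref{alg:threeplayer} --- here the subroutine is invoked exactly once per row, with no amplification loop.

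The no-error claim then follows from the same observations: under the restricted promise $\textsc{Exact\_Grover}$ outputs the correct index --- or $False$ --- with certainty, as recalled just before the statement of Algorithm \ref{alg:oneedge}, so every assignment in the loop is exact and a composition of errorless steps is errorless. I expect the only mildly delicate point to be making sure the degenerate all-zero row is genuinely covered: its Hamming weight $0$ is $\le 1$, so it lies within the promise of exact Grover, which then reports $False$, after which the loop records $\textsc{Win}\left(\Gamma,j\right)=\$0$ in accordance with the second branch of the definition of $\textsc{Win}$. This needs no new idea, but it has to be stated explicitly for both the correctness and the ``no error'' conclusions to be airtight.
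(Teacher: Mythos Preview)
Your proof is correct and follows essentially the same approach as the paper's one-line justification, which simply appeals to the properties of exact Grover search and the per-row cost summation from Theorem~\ref{th:general-k-players}. Your write-up is considerably more detailed---in particular you correctly identify the per-iteration cost as $O(\sqrt{j})$ (the row length) rather than the somewhat misleading $O(\sqrt{\deg j})$ annotation in the pseudocode, and you explicitly handle the all-zero-row case---but the underlying argument is the same.
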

The theorem follows from the properties of exact Grover's search and the evaluation of $O(\sqrt{\deg{j}})$ as in the proof of Theorem \ref{th:general-k-players}.

\section{Conclusion}
Recent results in quantum game theory have stepped into the field of combinatorial games.
In this work we generalized several of these results for solving multiplayer combinatorial games.
In particular, we established several upper bounds for quantum query complexity, which generally correspond to the running time of a quantum algorithm.
We also derived several classical lower bounds for these problems.
We did not focus on the classical upper bounds, but they obviously coincide with the lower bounds, which can be shown just by describing the straightforward dynamic programming approach.

The polynomial separation between quantum and classical complexities was shown using different kinds of games and different concepts of solution,
but all of them engage Subtraction games for the demonstration of the power of quantum algorithms in combinatorial game theory.
Perhaps, one should expect better and more general bounds to emerge for the quantum complexity of solving Subtraction games.
Of course, we hope also for detecting other examples of games with quantum-smaller-than-classical complexity.

\subsection*{Acknowledgement}
The research is supported by PostDoc Latvia Program, and by the ERDF within the project 1.1.1.2/VIAA/1/16/099 ``Optimal quantum-entangled behavior under unknown circumstances''.
The reported study was funded by RFBR according to the research project No.19-37-80008.

\end{document}